\documentclass[onecolumn,nofootinbib]{revtex4}
\usepackage{amsmath}
\usepackage{amssymb}
\usepackage{amsthm}
\usepackage{amsfonts}
\usepackage{graphicx}
\usepackage[latin1]{inputenc}
\usepackage[T1]{fontenc}
\pagestyle{plain}
\newtheorem{lemma}{Lemma}

\newtheorem{remark}{Remark}
\newtheorem{theorem}{Theorem}

\newcommand{\vp}{\varphi}
\newcommand{\tr}{{\rm Tr }}
\newcommand{\C}{\mathbb{C}}

\newcommand{\R}{\mathbb{R}}

\newcommand{\h}[1]{\mathcal{#1}}
\newcommand{\hi}{\mathcal{H}}

\newcommand{\cc}[1]{\overline{#1}}


\begin{document}
\title{Tunneling times with covariant measurements}
\author{J. Kiukas}
\email{jukka.kiukas@itp.uni-hannover.de}
\affiliation{Institut f\"ur Theoretische Physik, Leibniz Universität Hannover, Appelstrasse 2, 30167 Hannover, Germany.}
\author{A. Ruschhaupt}
\email{andreas.ruschhaupt@itp.uni-hannover.de}
\affiliation{Institut f\"ur Theoretische Physik, Leibniz Universität Hannover, Appelstrasse 2, 30167 Hannover, Germany.}
\author{R. F. Werner}
\email{reinhard.werner@itp.uni-hannover.de}
\affiliation{Institut f\"ur Theoretische Physik, Leibniz Universität Hannover, Appelstrasse 2, 30167 Hannover, Germany.}

\begin{abstract} We consider the time delay of massive, non-relativistic,
one-dimen\-sional particles due to a tunneling potential. In this
setting the well-known Hartman effect asserts that often the
sub-ensemble of particles going through the tunnel seems to cross the
tunnel region instantaneously. An obstacle to the utilization of this
effect for getting faster signals is the exponential damping by the
tunnel, so there seems to be a trade-off between speedup and intensity.
In this paper we prove that this trade-off is never in favor of
faster signals: the probability for a signal to reach its destination
before some deadline is always reduced by the tunnel, for arbitrary
incoming states, arbitrary positive and compactly supported tunnel
potentials, and arbitrary detectors. More specifically, we show this
for several different ways to define ``the same incoming state'' and
''the same detector'' when comparing the settings with and without
tunnel potential. The arrival time measurements are expressed in the
time-covariant approach, but we also allow the detection to be a
localization measurement at a later time.

\vspace{0.3cm}
\noindent PACS: 03.65.Db, 03.65.Nk
\end{abstract}

\maketitle

\section{Introduction}

Questions related to the tunneling phenomenon have been actively
studied since the early days of quantum mechanics, and some of them
are still not resolved. In the simple case of a massive particle
moving in one dimension through a localized (tunnel) potential, the
question of the ''time spent in the tunnel'' is especially
interesting, and has given rise to extensive discussion (see e.g.
\cite{Hartman,Kijowski,Enders,Steinberg,Chiao,Nimtz} and the references
therein). Some difficulties in dealing with this problem are rooted
in the absence of a selfadjoint ''time operator'' (``Pauli's
Theorem'' \cite{Pauli}). Instead, one has to use more general
framework of positive operator measures (POMs)
\cite{Ludwig,Holevo,Werner,Busch}. For a survey on time in quantum
mechanics, see \cite{Muga}.

An old observation related to tunneling times is the so called
Hartman effect \cite{Hartman}, which states that the transmitted part
of a wave function appears to move faster through the tunnel than the
corresponding free state. More precisely, after a long rectangular
barrier and for a wave function of narrow momentum distribution, in
leading order the transmitted pulse appears at the end of the tunnel
{\it instantaneously}. Therefore, it has been suggested that this
effect means superluminal signal transport \cite{Enders,Nimtz}.
However, all this is only true for the shape of the wave function
disregarding normalization. But obviously, especially for long
tunnels, for which the gain in speed would only be noticeable, the
transmission probability is exponentially small. So in any attempt to
utilize the Hartman effect for a faster signal transmission, we would
have to analyze the trade-off between transmission probability and
transmission speed.

The main result of this paper is that this trade-off is always
trivial: when damping is taken into account, transmission through a
tunnel will always slow down the signal. Figures 1 and 2 show a sketch of the
result of an arrival time measurement in a possible situation: The
arrival time probability density for the transmitted particles peaks
earlier than for the free particles. The density may even become
larger at some times. But if we look at the integrated density, i.e.,
the probability for the particles to arrive before a given deadline
$t$, plotted here as a function of $t$, then the free particles win.

\begin{figure}
\includegraphics[width=5cm]{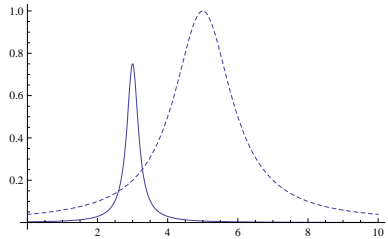}
\caption{Arrival probability density for free (dashed line) and transmitted (solid line)
particles}
\end{figure}
\begin{figure}
\includegraphics[width=5cm]{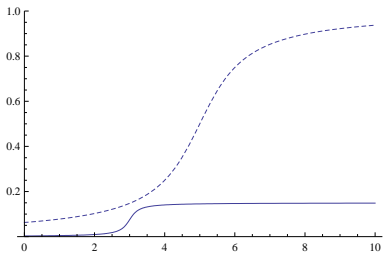}
\caption{Probability for arrival up to time t for free (dashed line) and transmitted (solid line)
particles}
\end{figure}

This is true in remarkable generality: for any incoming state, any
tunnel potential, and any detector. It is even true for several, in
general inequivalent approaches to formalizing the rules of this
race. Indeed we need to choose a precise notion of arrival detection,
of the equality of initial wave functions, and of the equality of
detectors for tunneling and free dynamics. The equality of initial
states is a non-trivial issue, because the two states are subject to
different dynamical evolutions. So at least we need to fix a
reference time. For this we have two choices, namely either a fixed
time (set to zero by convention), or $t=-\infty$, i.e., asymptotic
equality of the incoming states in the sense of scattering theory. In
this case we fix the direction from which the particles are coming by
choosing input states with positive momentum.

On the detection side, a natural choice is to describe the detectors
as covariant arrival time observables \cite{Holevo,Werner}. Again this
raises the issue of how to compare the two cases, because the
covariance condition explicitly depends on the time evolution, and an
observable can be covariant with respect to only one of them. Again
scattering theory helps, by defining a bijective correspondence
between the respective sets of covariant observables: we identify
observables, which give the same probability distributions on states
coinciding for $t\to+\infty$. This identification is also natural for
including finer descriptions of the detection process. For example,
we could modify both time evolutions by including an imaginary
``optical'' potential, resulting in contraction semigroups rather
than unitary groups. The loss of normalization is then interpreted as
arrival probability. Even more realistically, we could model the
detector by a system in a bound state, interacting with the particle
through a potential, and getting ionized in the detection process.
The ionization time (as measured by a covariant observable on the
escaping electron) is then taken as the detection time. It will be
shown elsewhere how these ideas lead to special cases
of covariant measurements.

In this paper we also look at another way of setting up the finish line
for the race: For particles traveling in the positive $x$-direction,
any time $t$, and any position $a$ behind (=to the right of) the
tunnel, we can replace the event ''the particle has arrived at point
$a$ before time $t$'' by ``the particle is located in the half axis
$[a,\infty)$ at time $t$''. The latter statement requires only a
position measurement, and hence does not require the theory of
arrival times. Although the two statements are not equivalent, and
correspond to different effect operators (the first probability is,
by definition, increasing as a function of $t$, but the second is not
in general increasing), they will be qualitatively similar, and equal
in the classical limit. Note that both the arrival times and the
localization observables (see e.g. \cite{Ali,Busch}) are represented
by positive operator measures, constrained by a covariance condition.
In order to emphasize the analogy with arrival times, for which a
projective measurement does not exist, we also allow localization
observables, which are general positive operator measures
(``POM'')\footnote{More commonly called ''POVMs'' for positive
operator {\it valued} measures}, rather than just the standard
position observable.

To summarize, we are looking at the following three approaches to our
problem:\\
\vskip12pt

\begin{tabular}{c|l|c|l}
&approach&initial reference time&detection\\\hline
I&covariant time&$-\infty$&time-covariant\\
II&localization& $-\infty$&localization\\
III&time-zero  & $0$&localization
\end{tabular}
\vskip12pt

The combination of the time-zero initialization with time-covariant
measurements is conspicuously absent from this table, because the
identification of detectors under different dynamics requires
asymptotic scattering theory, which we wanted to avoid in this
setting.

There would be more possibilities for the detection, with more
realistic detector descriptions, and some of these are now under
investigation. Approaches I and II have been discussed in a recent
paper by Werner and Ruschhaupt (to be published, see also a
conference report by Werner at the 40th Toru\'n Symposium, June 2008).
Here we have added the ''time zero approach'', as well as formulated
the treatment in a way that brings the use of positive operator
measures and covariance to the front.

\section{Preliminaries and notations}

As mentioned in the introduction, we have three distinct approaches, each formulated in terms of
covariant positive operator measures. Because of the covariance property (which we will precisely
define shortly below), the mathematical structure of the problem will be the same in each case, once the
observable is suitably transformed. More specifically, time observables are
defined via time translations and localization observables via space translations; in both cases, the
structure is the same in the Hilbert space where the translation generator acts multiplicatively.
Accordingly, we will transform into the ''energy representation'' $L^2([0,\infty),dE)$ for the arrival time
case, and into the ''momentum space'' $L^2(\R,dk)$ for the other two cases. Notationally, this is
conveniently implemented by defining the relevant basic operators (multiplication in particular) in the
generic $L^2(\R)$, and this is done in the following.

\subsection{Basic notations}

For any Hilbert space $\hi$, we let $\h L(\hi)$ denote the set of bounded operators on $\hi$.
Let $M$ be the multiplication operator acting in $L^2(\R)$ as $(M\vp)(x)= x\vp(x)$, on its domain of
selfadjointness. Let $D$ be the differential operator $i\frac{d}{dx}$, likewise in $L^2(\R)$.
These operators will be used in different forms: in the ''position representation'' $L^2(\R, dx)$, we will put
$Q:=M$ and $P:=-D$; they are the standard \emph{position and momentum operators}. In the
''momentum representation'' $L^2(\R, dk)$, the multiplication operator $M$ represents momentum, and
in the ''energy representation'' $L^2([0,\infty),dE)$ it acts as ''multiplication by energy''.

Let $F\in \h L(L^2(\R))$ be the Fourier-Plancherel operator, i.e. the unitary operator with
$$
(F\vp)(y) := \frac{1}{\sqrt{2\pi}}\int_\R e^{-iyx} \vp(x)\, dx, \ \ \vp\in L^2(\R)\cap L^1(\R).
$$
The operators $M$ and $D$ are well-known to be connected via the operator equalities
\begin{equation}\label{fourierconnection}
D= FMF^*=-F^*MF.
\end{equation}
(See e.g. \cite[pp. 106, 112]{Akhiezer}; it will be crucial to get the signs correctly.)
We will also denote $\hat{\psi}=F\psi$, and $\check{\psi} = F^*\psi$, for $\psi\in L^2(\R)$. For any
operator $A$ in $L^2(\R)$ (bounded or not), we denote $\hat{A} :=FAF^*$, and $\check{A} = F^*AF$.

In the physical context, we use the Fourier operator $F$ in the natural way as $F:L^2(\R, dx)\to L^2(\R, dk)$, so that the notations $\hat{\psi}$, $\check{\psi}$ are as usual. Also the meaning of $\hat{A}$ and $\check{A}$ should be clear: if $A$ is an operator in the position space, then $\hat{A}$ is the corresponding operator in the momentum space, and if $A$ acts in the momentum space, then $\check{A}$ is how it acts in the position space.  In particular, $\hat{Q}=D$ acts as a differential operator in $L^2(\R, dk)$, while $\hat{P}=M$ is the multiplication.

For any Borel function
$f:\R\to \C$, the operator $f(M)$, as defined via the spectral calculus, is simply the multiplication by $f$,
on its domain $\left\{\vp\in L^2(\R)\mid \int |f(x)\vp(x)|^2\, dx <\infty\right\}$.
We let $P_+=\chi_{[0,\infty)}(M)$ and $P_-=\chi_{(-\infty,0]}(M)$ (where $\chi_X$ is the characteristic
function of a set $X\subset \R$). Then $P_\pm\in \h L(L^2(\R))$ are projections, and we do the obvious
identifications $P_+L^2(\R)=L^2([0,\infty))$, $P_-L^2(\R)=L^2((-\infty, 0])$. Using the above defined
notation, we have $\check{P}_{\pm}=F^*P_{\pm}F$. Since $F^2$ is the parity operator, we
also have $\check{P}_{\mp} = I-\check{P}_{\pm}= FP_{\pm}F^*=\hat{P}_{\pm}$.

The reason for introducing these projections is that we will frequently need the subspaces of positive
and negative momenta. In $L^2(\R, dk)$, these are just $P_+L^2(\R, dk) = L^2([0,\infty),dk)$ and $P_-
L^2(\R, dk) = L^2((-\infty,0],dk)$, while in the position space $L^2(\R, dx)$, they are the images of the
projections $\check{P}_\pm$.

\subsection{Covariant observables}

Each of the three approaches is formulated in terms of positive
operator measures, defined on the Borel $\sigma$-algebra $\h B(\R)$
of the real line. We proceed to define this concept.

Let $\hi$ be a Hilbert space. A set function $E:\h B(\R)\to \h L(\hi)$ is said to be a \emph{positive operator
measure} (POM) if $E$ is strongly (or, equivalently, weakly) $\sigma$-additive, and $0\leq E(B)\leq I$ for
all $B\in \h B(\R)$. For any pair of vectors $\vp,\psi\in \hi$, and a POM $E:\h B(\R)\to \h L(\hi)$ we can associate the
complex measure $B\mapsto E_{\psi,\vp}(B) := \langle \psi|E(B)\vp\rangle$.

A POM will also be called \emph{observable}, when a quantum system is associated with the Hilbert
space $\hi$. The physical meaning of this is imported by postulating that for any \emph{state} operator $
\rho$ (i.e. a positive operator of trace one), the number
$\tr[\rho E(B)]$ is the probability that the measurement of $E$ yields a value from the Borel set $B$, given that
the system is prepared into the state $\rho$.
Note that here we do \emph{not} require an observable to be normalized in the sense that $E(\R)=I$.
The positive operator $I-E(\R)$ is simply interpreted as corresponding to the event of ''no detection''.

We will need two kinds of observables, \emph{arrival time} and \emph{localization} observables.
In the first case the problem of \emph{time} in quantum mechanics is obviously involved. Without delving
into the long history of this question (see the references given in the introduction), we recall that the use
of POMs is forced by the fact that there is no selfadjoint operator giving eligible ''time'' probability
distributions.

A \emph{time observable} associated with the free evolution $H_0:=P^2$ is a POM $E:\h B(\R)\to \h L
(L^2(\R, dx))$ satisfying the \emph{covariance} condition
\begin{equation}\label{timeobs}
e^{itH_0} E(B)e^{-itH_0} = E(B+t), \ \ \text{for all } B\in \h B(\R),\, t\in \R.
\end{equation}
This encodes the minimal requirement that the measurement of $E $
performed at time $t>0$ gives a result from the range $[t_1,t_2]$
with the same probability as the measurement of $E$ at $t=0$ gives a
result from the shifted range $[t_1+t,t_2+t]$.

For an \emph{arrival time} observable, we additionally require that $E(\R)$ is the projection onto the
subspace of positive momenta, i.e. $E(\R)=\check{P}_+$. This is simply because the ''arrivals'' are
supposed to be coming only from the left, so the negative momentum part is not detected (see \cite
{Werner} for a more general formulation of screen observables.)

In the second (and third) approach, we need \emph{localization} observables. The standard localization
observable is given by the spectral measure $E^Q$ of the position operator $Q$. However, in order to emphasize the mathematical similarity of the approaches, we consider general
localization observables, i.e. POMs $E:\h B(\R)\to \h L(L^2(\R,dx))$, satisfying \emph{translation
covariance} and \emph{velocity boost invariance}:
\begin{align}\label{locobs}
e^{-itP} E(B)e^{itP} &= E(B+t), & e^{itQ} E(B)e^{-itQ} &= E(B), \ \ \text{ for all }  t\in \R.
\end{align}
Such observables have been studied in the context of approximate (or imprecise) position
measurements (see e.g. \cite{Davies,Ali,Busch,Carmeli}). In particular, they are all known to be of the
form $E=\mu * E^Q$, where $\mu:\h B(\R)\to \h L(\hi)$ is a probability measure, and the convolution is
defined in terms of the associated complex measures.

\subsection{The tunnel potential and scattering in one dimension}

Having defined the covariance concepts, we move on consider the
tunnel potential. Quite naturally, the essential quantity will turn
out to be the \emph{transmission amplitude} associated with the
scattering from the potential. The relevant information will be given
in Theorem \ref{transmission} below.

Let $H_0:=P^2$ be the free Hamiltonian (in the position representation). For our purposes, a \emph
{tunnel potential} is a (measurable) function $V:\R\to \R$ such that
\begin{itemize}
\item[(i)] $V$ is compactly supported and bounded;
\item[(ii)] The Hamiltonian $H=H_0+V$ has no eigenvalues (e.g. $V$ is positive).
\end{itemize}
Condition (i) assures that the tunnel is strictly localized in some interval $(x_0,x_1)$, and does not form
an impenetrable barrier. The second condition means that it actually acts as a barrier rather than e.g. a well. In order to not to exclude the square barriers typically used in the context of tunneling, we have not required continuity for the potential.

Next we need to recall some basic facts of scattering theory in one dimension. Under the above conditions defining the tunnel potential, the Hamiltonian $H = H_0+V$ is selfadjoint, with purely absolutely continuous spectrum $[0,\infty)$.  In particular, \emph{there are no bound states}. The \emph{wave operators}
$$
\Omega_{\pm} = s-\lim_{t\rightarrow\pm \infty} e^{itH}e^{-itH_0}
$$
exist and \emph{asymptotic completeness} holds, i.e. $\Omega_{+}(\hi)=\Omega_-(\hi)=L^2(\R,dx)$. The
operators $\Omega_{\pm}$ are unitary, and the unitary operator $S=\Omega_-^*\Omega$, which
connects incoming and outgoing asymptotics, is called the \emph{scattering operator}.
We have the following intertwining relations.
\begin{align}
e^{-it H}\Omega_{\pm} &= \Omega_{\pm}e^{-it H_0}, &  t \in \R;\label{intertwining}\\
e^{-itH_0}S &= S e^{-it H_0}, & t \in \R.\label{scommutation}
\end{align}
The last equality implies that $S$ commutes with $H_0 = P^2$, and
this gives rise to a decomposition of $S$: letting $\mathsf{P}$ denote the parity operator, each of the four operators
$\check{P}_+S\check{P}_+$, $\check{P}_+S\check{P}_-\mathsf{P}$, $\mathsf{P}\check{P}_-S\check{P}_+$, and $\mathsf{P}\check{P}_-S\check{P}_-\mathsf{P}$, acts on $\check{P}_+L^2(\R,dx)$, and commutes with the momentum $P$. The corresponding
momentum space operators thus act multiplicatively on $L^2([0,\infty), dk)$; we will denote them by
\begin{align*}
T_r(\hat{P}) &:= \mathsf{P}P_-\hat{S}P_-\mathsf{P}, & R_l(\hat{P}) &:= \mathsf{P}P_-\hat{S}P_+\mathsf{P}\\
R_r(\hat{P}) &:= P_+\hat{S}P_-\mathsf{P}, & T_l(\hat{P}) &:= P_+\hat{S}P_+.
\end{align*}
(Recall the notation: in $L^2(\R, dk)$ the momentum operator is $\hat{P}=M$.) The four functions $T_l, T_r, R_l, R_r:[0,\infty)\to \C$ thus  defined are measurable, and (essentially)
bounded by one.
The functions $T_l$ and $T_r$ are called the \emph{coefficients of transmission}, while $R_l$ and
$R_r$ are the \emph{coefficients of reflection}. By denoting
$$
\mathsf{S}(k) :=\begin{pmatrix} T_r(k) & R_l(k) \\ R_r(k) & T_l(k) \end{pmatrix}, \ \ k\geq 0,
$$
one gets an explicit form for the action of $S$ in the momentum space:
\begin{equation}\label{sconnection}
\begin{pmatrix} \hat{S}\hat{\psi}(-k)\\ \hat{S}\hat{\psi}(k)\end{pmatrix} = \mathsf{S}(k)\begin{pmatrix} \hat{\psi}(-k)\\ \hat{\psi}(k)\end{pmatrix}, \ \ \ k\geq 0, \ \ \hat{\psi}\in L^2(\R, dk).
\end{equation}
The $k$-dependent matrix $\mathsf{S}(k)$, $k\geq 0$, is called the \emph{scattering matrix} for $H$.
It mixes the positive and negative momentum components of the ''initial'' asymptotically free state to
produce the corresponding ''final'' free state, having ''transmitted'' and ''reflected'' parts.

The structure of transmission and reflection coefficients is investigated via the stationary scattering
theory: there exists, for each $k\in \R$, two solutions $\phi_1(x,k)$ and $\phi_2(x,k)$ of the differential
equation
\begin{equation}\label{schrodinger1}
-\frac{d^2}{dx^2}\psi(x) +V(x)\psi(x) = k^2\psi(x),
\end{equation}
analytically depending on $k$, and satisfying
\begin{eqnarray}
\phi_1(x,k)&=&\begin{cases} R_l(k)e^{-ikx}+e^{ikx}, & x\leq x_0\\ T_l(k) e^{ikx}, & x\geq x_1\end{cases}\label{asymptotic1}\\
\phi_2(x,k)&=&\begin{cases} T_r(k)e^{-ikx}, & x\leq x_0 \\ R_r(k)e^{ikx}+e^{-ikx}, & x\geq x_1\end{cases},\label{asymptotic2}
\end{eqnarray}
where $x_0,x_1\in \R$ are any two points such that the support of $V$ is included in $(x_0,x_1)$.
We refer to \cite{DT} for the stationary theory. The
functions $T_r$, $T_l$, and $R_l$, $R_r$ appearing here are exactly
the transmission and reflection coefficients we defined via the
''time- dependent'' theory\footnote{We defined $T_r(k)$, $T_l(k)$,
$R_l(k)$, and $R_r(k)$ for positive $k$; here they are extended to
negative $k$ by e.g. $T_l(k)=\overline{T_l(-k)}$.}.

We need the following properties of the scattering matrix
\cite[Theorem 1]{DT}. Let $\C_+\subset \C$ stand for the open upper
half-plane, i.e. $\C_+=\{ \omega\in \C \mid {\rm Re}(\omega)>0\}$.

\begin{theorem}\label{transmission} Let $V$ be a tunnel potential.
\begin{itemize}
\item[(a)] The scattering matrix $\mathsf{S}(k)$ is unitary for all $k\in \R$, $k\mapsto \mathsf{S}(k)$ is continuous, and we have $T:=T_l=T_r$, $T(-k)=\cc{T(k)}$, $R_r(-k)=\cc{R_r(k)}$, $R_l(-k)=\cc{R_l(k)}$.
\item[(b)] The transmission amplitude $T$ can be extended to the upper half plane $\C_+$ in such a way that $T$ is continuous in $\C_+\cup \R$, analytic in $\C_+$, and satisfies $|T(k)|\leq 1$ for all $k\in \C\cup \R$.
\end{itemize}
\end{theorem}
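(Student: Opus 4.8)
The plan is to work entirely within the stationary scattering theory summarized in \eqref{schrodinger1}--\eqref{asymptotic2}, exploiting decisively that the tunnel potential $V$ is compactly supported in $(x_0,x_1)$. The basic tool is the Wronskian $W[f,g]=fg'-f'g$ of two solutions of \eqref{schrodinger1} at a fixed $k$: since the equation has no first-order term, $W[f,g]$ is independent of $x$, and a direct computation gives $W[e^{ikx},e^{-ikx}]=-2ik$. I would introduce the two Jost solutions $\psi_1(\cdot,k)$, $\psi_2(\cdot,k)$ characterized by $\psi_1(x,k)=e^{ikx}$ for $x\geq x_1$ and $\psi_2(x,k)=e^{-ikx}$ for $x\leq x_0$; because $V$ vanishes off a compact set, these reduce to explicit exponentials away from the support of $V$, and the scattering solutions of \eqref{asymptotic1}--\eqref{asymptotic2} are then $\phi_1=T_l\psi_1$ and $\phi_2=T_r\psi_2$.

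For part (a), the reciprocity $T_l=T_r$ follows by evaluating the constant Wronskian $W[\phi_1,\phi_2]$ on both sides of the support: left of $x_0$ one obtains $-2ik\,T_r$ and right of $x_1$ one obtains $-2ik\,T_l$, forcing equality for $k\neq 0$. The symmetry relations $T(-k)=\overline{T(k)}$ and $R_{l,r}(-k)=\overline{R_{l,r}(k)}$ come from the fact that, for real $k$, complex conjugation sends a solution of \eqref{schrodinger1} to another solution with $k$ replaced by $-k$; comparing the asymptotics of $\overline{\phi_j(x,k)}$ with those of $\phi_j(x,-k)$ and invoking uniqueness of solutions with prescribed asymptotics yields the identities. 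Unitarity of $\mathsf{S}(k)$ is already available from the unitarity of $S$ through the fibrewise action \eqref{sconnection} over the energy; alternatively it follows from the flux identity obtained by evaluating $W[\phi_1,\overline{\phi_1}]$ on both sides of the support, which gives $|T|^2+|R_l|^2=1$ together with the off-diagonal orthogonality relation. Continuity of $k\mapsto\mathsf{S}(k)$ reflects the continuous (indeed analytic) dependence of the Jost solutions on $k$.

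For part (b), I would record the Wronskian form of the transmission amplitude, $T(k)=-2ik/W[\psi_1,\psi_2](k)$, obtained exactly as in the reciprocity computation. Since $V$ is compactly supported, the Volterra integral equations for $\psi_1,\psi_2$ integrate only over $(x_0,x_1)$, so $\psi_1,\psi_2$, and hence $W[\psi_1,\psi_2]$, are \emph{entire} in $k$; consequently $T$ is meromorphic on $\C$, with poles only at the zeros of $W$. A zero of $W$ at $k=i\kappa$ with $\kappa>0$ forces $\psi_1\propto\psi_2$, i.e.\ a solution decaying like $e^{-\kappa|x|}$ at both ends, which is precisely an $L^2$ eigenfunction at energy $-\kappa^2$. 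As the tunnel potential has no eigenvalues, $W$ has no zeros in $\C_+$, so $T$ is analytic there and continuous up to $\R$ (any zero of $W$ on $\R$ can occur only at $k=0$, where the factor $k$ in the numerator keeps $T$ finite). The bound $|T|\leq 1$ on $\R$ is just the unitarity relation of part (a), and as $|k|\to\infty$ in $\C_+$ the high-energy behaviour $\psi_1\to e^{ikx}$, $\psi_2\to e^{-ikx}$ gives $W\to-2ik$, hence $T(k)\to 1$.

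The hard part will be upgrading $|T|\leq 1$ from the boundary $\R$ to all of $\C_+$, since the domain is unbounded and the ordinary maximum modulus principle does not apply directly. I would resolve this with a Phragm\'en--Lindel\"of argument: $T$ is analytic in $\C_+$, continuous up to $\R$, pole-free, and tends to $1$ at infinity, hence bounded there, and it satisfies $|T|\leq 1$ on the boundary, so Phragm\'en--Lindel\"of yields $|T|\leq 1$ throughout $\C_+$. The delicate step is establishing uniform boundedness and the limit $T\to 1$ with enough uniformity in $\arg k$, which rests on the standard high-energy estimates for the Jost solutions; these follow from iterating their Volterra equations and again benefit decisively from the compact support of $V$.
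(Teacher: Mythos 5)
The first thing to note is that the paper does not actually prove Theorem \ref{transmission}: it is imported wholesale from Deift and Trubowitz \cite{DT} (Theorem 1 there), which treats the wider class of potentials with finite first moment. Your proposal therefore supplies a proof where the paper has only a citation. The route you take --- Jost solutions $\psi_1,\psi_2$, constancy of the Wronskian, the identifications $\phi_1=T_l\psi_1$, $\phi_2=T_r\psi_2$, reciprocity and the flux identity $|T|^2+|R_l|^2=1$ from Wronskian evaluations on either side of the barrier, entirety of $k\mapsto W[\psi_1,\psi_2](k)$ from the Volterra iteration over the compact support, the identification of zeros of $W$ in $\C_+$ with bound states, and a Phragm\'en--Lindel\"of argument for $|T|\leq 1$ in $\C_+$ --- is the classical stationary-scattering argument, i.e.\ essentially the machinery behind the cited reference, legitimately simplified by compact support (which is what makes $W$ entire rather than merely analytic in $\C_+$ and makes the high-energy estimates elementary). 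The individual computations you invoke are correct: $W[\phi_1,\phi_2]$ equals $-2ikT_l$ to the right of the barrier and $-2ikT_r$ to the left; $T=-2ik/W[\psi_1,\psi_2]$; and the Phragm\'en--Lindel\"of step is valid once $T$ is bounded on the closed half-plane, which your uniform estimate $W(k)=-2ik\bigl(1+O(1/|k|)\bigr)$ (obtained from $|(e^{2ikt}-1)/(2ik)|\leq\min(t,1/|k|)$ for ${\rm Im}\,k\geq 0$ in the Volterra iteration) indeed delivers.

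Two points in your sketch are genuinely thin and need to be closed. First, the parenthetical claim that at $k=0$ ``the factor $k$ in the numerator keeps $T$ finite'' is not automatic: it requires the zero of $W[\psi_1,\psi_2]$ at $k=0$ (if there is one) to be at most simple, since otherwise $-2ik/W(k)$ blows up. This is repaired with facts you already have: for real $k\neq 0$ the flux identity gives $|T(k)|\leq 1$, i.e.\ $|W(k)|\geq 2|k|$ (this also substantiates your unproved assertion that $W\neq 0$ on $\R\setminus\{0\}$); since $W$ is entire, a zero of order $\geq 2$ at the origin would force $|W(k)|=O(|k|^2)=o(|k|)$ there, a contradiction. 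Hence the singularity of $T$ at $0$ is removable and $|T(0)|\leq 1$ by continuity --- the same care is needed for the continuity of the reflection coefficients, hence of $k\mapsto\mathsf{S}(k)$, at $k=0$. Second, your bound-state argument as stated rules out zeros of $W$ only on the positive imaginary axis, whereas analyticity of $T$ requires $W\neq 0$ on all of $\C_+$. The same construction closes this gap: a zero at any $k\in\C_+$ makes $\psi_1\propto\psi_2$, and since $|\psi_1(x,k)|=e^{-{\rm Im}(k)x}$ for $x\geq x_1$ and $|\psi_2(x,k)|=e^{{\rm Im}(k)x}$ for $x\leq x_0$, the common solution is an $L^2$ eigenfunction of $H$ with eigenvalue $k^2$; self-adjointness forces $k^2\in\R$, which for ${\rm Im}\,k>0$ means $k$ is purely imaginary, and then the no-eigenvalue hypothesis on the tunnel potential applies. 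With these two repairs your argument is complete.
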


\begin{remark}\rm The restriction for compactly supported potentials is not really necessary for the
scattering approach; we could just as well use a potential with no bound states, and sufficiently rapid decrease at infinity to
ensure that (a) the Hamiltonian is a well-defined selfadjoint operator, (b) wave operators exist and are
complete, (c) the stationary theory works (with the relations \eqref{asymptotic1} and \eqref{asymptotic2}
understood as ''asymptotically'' valid), and (d) the connection between the ''time-dependent'' and stationary pictures is secured. Specific conditions
for each of these requirements can be found in standard literature (see e.g. \cite{Reed1,Reed2, DT}).
\end{remark}

For the ''time zero approach'', which does not \emph{directly} involve scattering theory, we will use the
expansion of the evolution $e^{-itH}\psi$ in terms of the basic solutions $\phi_i(\cdot,k)$.
Such an expansion is traditionally used in the context of stationary scattering theory; the basic solutions
are called ''improper eigenfunctions'' of $H$. In general, the existence of the expansion is a highly
nontrivial problem, which has a long history (we only mention the old work of Titchmarsh \cite
{Titchmarsh}, as well as some relatively recent papers \cite{Christ1,Christ2,Deift}). We will need the
expansion only for tunnel potentials (compactly supported and bounded), for which it is known to hold,
according to the references just mentioned.
(As in the case of asymptotic completeness, the problems arise mainly for slowly decaying potentials.)

For $\psi$ belonging to the Schwartz space of rapidly decreasing functions, we define
\begin{equation}
(k,\psi)_i := \frac{1}{\sqrt{2\pi}}\int_\R \phi_i(y,-k)\psi(y) \, dx = \frac{1}{\sqrt{2\pi}}\int_\R \cc{\phi_i(y,k)}\psi(y) \, dx.
\end{equation}
Then
\begin{equation}\label{expansion3}
(e^{-itH}\psi)(x) = \frac{1}{\sqrt{2\pi}}\int_{-\infty}^\infty \frac 12[(k,\psi)_1\phi_1(x,k)+(k,\psi)_2\phi_2(x,k)] e^{-itk^2}\, dk.
\end{equation}
Note that the absence of bound states is reflected in this expansion.

\section{Preparing the ''initial state'' of the particle before the tunnel}

In the introduction we already emphasized the importance of \emph{identifying the initial states} to be
the same in both evolutions. In the first two approaches, the identification is done by means of the
scattering theory; for any given vector state $\psi_0\in L^2(\R,dx)$, with $\hat{\psi_0}\in L^2([0,\infty),dk)$ (i.e.
positive momenta), we find $\psi\in L^2(\R,dx)$ such that $e^{-itH}\psi\sim e^{-itH_0}\psi_0$
asymptotically at $t\rightarrow -\infty$, in the sense that the difference goes strongly to zero at this limit.
This just means $\psi=\Omega_-\psi_0$. Note that here $\psi_0$ is not the initial state, because the
''initial time'' is considered to be $t=-\infty$.
According to a well-known result called ''scattering into cones'' \cite{Dollard}, this setup means that the
particle is initially localized ''far to the left'' of the potential at $t\rightarrow -\infty$, and is ''going to the
right'' at any time $t$.

Obviously, the pure state $\psi_0$ can also be replaced by a general state operator $\rho_0$. Then the
condition of positive momenta is $\tr[\rho_0\check{P}_+]=1$, or, equivalently,
$\check{P}_+\rho_0\check{P}_+=\rho_0$.

In the ''time zero approach'', we take an interval $(x_0,x_1)$ which includes the support of the tunnel
potential $V$, and at the initial time $t=0$ we prepare a state $\psi_0\in L^2((-\infty, x_0],dx)$. Then for
$t>0$ the tunneled and freely evolved states are simply $e^{-itH}\psi_0$ and $e^{-itH_0}\psi_0$, respectively.
We let $P_{{\rm init}}$ denote the projection onto $L^2((-\infty, x_0],dx)$, so that we can state the initial
condition for a general state $\rho_0$ as $\tr[\rho_0 P_{{\rm init}}]=1$.

\section{The ''detection'' of the particle after the tunnel}

We describe here in detail the detection method in each of the three schemes; in each case, we end up
with two relevant probabilities, corresponding to the tunnel particle and the free particle, respectively.

\subsection{Approach I: arrival time}

For an arrival time observable $E$, the number $\tr[E((-\infty, t])\rho]$ is interpreted as the probability that
a particle whose state is $\rho$ at $t=0$ arrives at a certain point (which depends on $E$) during the
time $(-\infty, t]$. As explained in the introduction, the idea is to compare the arrival time probability of a
particle moving in the presence of a potential, with the corresponding probability of a freely evolving
particle, with initial states identified as above.

For a free particle, an arrival time observable $E_0:\h B(\R)\to \h L(\hi)$ must satisfy the covariance
condition \eqref{timeobs}, and the additional condition that $E(\R)=\check{P}_+$. As explained before,
this means that the observable is only sensitive to positive momenta; particles traveling ''to the left'' will
not be detected. The correct time observable $E$ for the evolution according to $H$ should satisfy
\eqref{timeobs} with $H_0$ replaced by $H$, because $H$ generates the time translations for this
system.

With $E_0$ and $E$ chosen this way, and given a pure state $\psi_0$
as in the preceding section, with $\psi=\Omega_-\psi_0$, the arrival
probabilities to be compared are of the form $\langle \psi
|E((-\infty,t])\psi \rangle$ and $\langle \psi_0
|E_0((-\infty,t])\psi_0\rangle$. In order to ensure that the
comparison is meaningful, the observables $E$ and $E_0$ have to be
''the same after the scattering event'', i.e., at large times
$t\rightarrow\infty$. Accordingly, we require that for any given
$\vp_0\in \hi$,
$$
\langle \vp |E(B) \vp\rangle = \langle \vp_0 |E_0(B)\vp_0\rangle, \ \text{ for all } B\in \h B(\R),
$$
with $\vp = \Omega_+\vp_0$. This just amounts to saying that arrival time probabilities corresponding to
the two evolutions should coincide for states which will become asymptotically equal at
$t\rightarrow\infty$. This condition is equivalent to the requirement $E(B) = \Omega_+E_0(B)\Omega_+^*$.
Note that for any arrival time observable $E_0$ corresponding to $H_0$, the observable $B\mapsto E(B):=\Omega_+E_0(B)\Omega_+^*$ indeed satisfies \eqref{timeobs} with $H_0$ replaced by $H$ because of \eqref{intertwining}.

Hence, in the end we actually need only the observable $E_0$, which satisfies \eqref{timeobs}; we
compare
$$\langle \psi |E((-\infty,t])\psi\rangle = \langle S\psi_0 |E_0((-\infty,t])S\psi_0\rangle$$ with
$$\langle \psi_0 |E_0((-\infty,t])\psi_0\rangle.$$
Moreover, since $E_0((-\infty, t])\leq \check{P}_+$,
and $\psi_0\in \check{P}_+L^2(\R,dx)$, we can simply replace $S$ by the transmission amplitude
$\check{P}_+S\check{P}_+=T(P)$ in the former. This just means that because the observable is
only sensitive to positive momenta, it does not ''see'' the reflected part of the state. Using an arbitrary
state $\rho_0$ with $\tr[\rho_0\check{P}_+]=1$, we thus compare
\begin{align}
\mathbf{p}_{\rm I}(t) &:= \tr[T(P)^*E_0((-\infty,t])T(P)\rho_0],\nonumber\\
\mathbf{p}^0_{\rm I}(t) &:= \tr[E_0((-\infty,t])\rho_0],\label{arrivalprobs}
\end{align}
where the index ${\rm I}$ refers to this first approach.

\subsection{Approach II: localization measurement}

Here we do not have the problem of identifying the observables; we make the \emph{same} localization
measurement $E$ for both tunnel and particle case, at a large preset time $t$. At this time, the
corresponding states are $e^{-itH}\psi$ and
$e^{-itH_0}\psi_0$. Here ''large'' time means that we are in the asymptotic regime, i.e. we identify
$e^{-itH}\psi\sim e^{-itH_0}S\psi_0$ at $t\rightarrow\infty$, in the sense that the difference goes strongly
to zero.

Now the reflected part $\check{P}_-S\psi_0$ does not contribute to the localization measurement since it
''moves to the left'' while we localize in $[a,\infty)$. In the case of sharp localization (corresponding to the
spectral measure $E^Q$ of $Q$) this is a well-known consequence of the ''scattering into cones'' -
theorem, and can be derived from the asymptotic form for the free propagator (see e.g. \cite[p. 60]{Reed1}):
$$
s-\lim_{t\rightarrow \infty} E^Q([a,\infty))e^{-itH_0}\check{P}_- =0.
$$
As we mentioned when introducing the localization observables, each of them is a convolution of the
sharp localization $E^Q$ with a probability measure. Using this fact, the same limit result is easily
proved also for the general case:
\begin{lemma}\label{asymptoticloc}
Let $E:\h B(\R)\to \h L(L^2(\R,dx))$ be an arbitrary localization observable, and $a\in \R$. Then
$$
s-\lim_{t\rightarrow\infty} E([a,\infty))e^{-itH_0}\check{P}_- = 0.
$$
\end{lemma}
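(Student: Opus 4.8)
The plan is to reduce the general statement to the sharp-localization case already quoted, exploiting the convolution structure $E=\mu * E^Q$ together with the positivity $0\le E(B)\le I$. First I would fix a vector $\psi\in L^2(\R,dx)$, set $\phi:=\check{P}_-\psi$ and $\xi_t:=e^{-itH_0}\phi=e^{-itH_0}\check{P}_-\psi$, and note that proving the strong limit amounts to showing $\|E([a,\infty))\xi_t\|\to 0$. Since $E([a,\infty))$ is a value of a POM it is self-adjoint with $0\le E([a,\infty))\le I$, hence $E([a,\infty))^2\le E([a,\infty))$, giving the estimate
$$
\|E([a,\infty))\xi_t\|^2 = \langle \xi_t|E([a,\infty))^2\xi_t\rangle \le \langle \xi_t|E([a,\infty))\xi_t\rangle .
$$
This converts the norm statement into a statement about the single quadratic form $t\mapsto \langle \xi_t|E([a,\infty))\xi_t\rangle$, which is far easier to control through the convolution representation.

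Next I would insert the structure theorem $E=\mu * E^Q$. Writing the convolution in terms of the associated complex measures gives $E([a,\infty))=\int_\R E^Q([a-y,\infty))\,d\mu(y)$, so that
$$
\langle \xi_t|E([a,\infty))\xi_t\rangle = \int_\R \|E^Q([a-y,\infty))\xi_t\|^2\,d\mu(y).
$$
For each fixed $y$ the integrand tends to $0$ as $t\to\infty$: since $\xi_t=e^{-itH_0}\check{P}_-\psi$, this is exactly the sharp-localization limit $s-\lim_{t\to\infty}E^Q([c,\infty))e^{-itH_0}\check{P}_-=0$ quoted above (valid for every half-line $[c,\infty)$), applied with $c=a-y$. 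Moreover the integrand is dominated by $\|\phi\|^2$ uniformly in $t$ and $y$. Because $\mu$ is a probability measure, the constant $\|\phi\|^2$ is $\mu$-integrable, so dominated convergence lets me pass the limit inside the integral, yielding $\langle \xi_t|E([a,\infty))\xi_t\rangle\to 0$ and hence the claim.

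The only genuinely delicate points are bookkeeping ones. One must get the direction of the shift in the convolution right, so that the half-line $[a,\infty)$ is translated to $[a-y,\infty)$ and the sharp result indeed applies with the correct threshold; and one must justify the interchange of the $t\to\infty$ limit with the $\mu$-integration, for which the uniform bound $\|\phi\|^2$ together with the finiteness of $\mu$ is precisely what is needed. No new scattering input is required beyond the sharp case: the entire content of the lemma is the ``smearing'' of that case by the probability measure $\mu$, and the passage from a norm to a quadratic form via $0\le E(B)\le I$ is what makes the dominated-convergence argument go through cleanly.
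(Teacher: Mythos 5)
Your proof is correct, and it rests on the same two pillars as the paper's own argument: the structure theorem $E=\mu * E^Q$ for localization observables and the sharp-localization limit $s-\lim_{t\rightarrow\infty}E^Q([c,\infty))e^{-itH_0}\check{P}_-=0$, valid for every half-line. Where you genuinely differ is in how the strong (norm) convergence is extracted from the smearing. The paper keeps the full sesquilinear form: it writes $\langle\psi|E([a,\infty))\varphi_t\rangle$ as the product measure $(\mu\times E^Q_{\psi,\varphi_t})$ evaluated on $\{(x,y)\mid x+y\geq a\}$, splits $\mu$ into a compact part $[-b,b]$ and tails of mass less than $\epsilon/2$, bounds the two pieces by total-variation estimates that are uniform over unit vectors $\psi$, and recovers $\|E([a,\infty))\varphi_t\|$ as a supremum over $\psi$. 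You instead collapse everything onto the diagonal: the operator inequality $E([a,\infty))^2\leq E([a,\infty))$ (valid because $0\leq E(B)\leq I$) reduces the norm to the single quadratic form
\begin{equation*}
\langle \xi_t|E([a,\infty))\xi_t\rangle = \int_\R \|E^Q([a-y,\infty))\xi_t\|^2\, d\mu(y),
\end{equation*}
which is a $\mu$-integral of nonnegative functions dominated by the constant $\|\phi\|^2$, so dominated convergence finishes the job. This buys a cleaner argument: no complex measures or total variations, no uniform-in-$\psi$ estimates, and the tail of $\mu$ is handled automatically by domination rather than by an explicit $\epsilon$-splitting; the paper's proof is more hands-on but establishes the same interchange of limit and integration by elementary means. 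Two small points deserve a word in a final write-up: dominated convergence should formally be applied along arbitrary sequences $t_n\rightarrow\infty$, since $t$ is a continuous parameter (trivial, but worth stating); and the identity $\langle\xi_t|E^Q([a-y,\infty))\xi_t\rangle=\|E^Q([a-y,\infty))\xi_t\|^2$ uses that $E^Q$ is projection valued --- which is exactly what fails for the outer observable $E$, and is precisely why you need $E([a,\infty))^2\leq E([a,\infty))$ there. You handled both correctly.
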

\begin{proof} Suppose that $\vp\in \check{P}_-L^2(\R,dx)$, $\|\vp\|=1$, and put $\vp_t = e^{-itH_0}\vp$ for $t\in \R$.
Let $\mu:\h B(\R)\to [0,1]$ be a probability measure, such that $E= \mu*E^Q$. For any $\psi\in \hi$ let
$E_{\psi,\vp_t}$ denote
the complex measure $B\mapsto \langle \psi |E(B)\vp_t\rangle$, and define $E^Q_{\psi,\vp_t}$
similarly. Then we have $E_{\psi,\vp_t}= \mu * E^Q_{\psi,\vp_t}$, so that $E_{\psi,\vp_t}([a,\infty)) = (\mu\times E^Q_{\psi,\vp_t})(A)$,
where $A=\{ (x,y)\in \R^2\mid x+y\geq a\}$. Let $\epsilon >0$. Since $\mu$ is a finite measure, there
exists an $b\in (0,\infty)$ with $\mu\big((-\infty, -b)\cup (b,\infty)\big)<\frac \epsilon 2$. Let $t_0>0$
be such that for $t\geq t_0$ we have $\|E^Q([a-b,\infty))\vp_t\|<\frac {\epsilon}{ 2}$.
Then for $t\geq t_0$, and any unit vector $\psi\in \hi$, we have
$| E^Q_{\psi,\vp_t}|([a-b,\infty))< \frac \epsilon 2$ (where the $|\cdot |$ stands for the total
variation of the measure), and consequently,
\begin{eqnarray*}
|E_{\psi,\vp_t}([a,\infty))| &\leq& (\mu\times |E^Q_{\psi,\vp_t}|)(A)\\
&\leq& \mu\big((-\infty,-b)\cup (b,\infty)\big)|E^Q_{\psi,\vp_t}|(\R)+\mu(\R)|E^Q_{\psi,\vp_t}|([a-b,\infty))<\epsilon.
\end{eqnarray*}
This implies $\|E([a,\infty))\vp_t\|<\epsilon$, so the proof is complete.
\end{proof}
Thus, we may identify
$$E([a,\infty))e^{-itH}\psi\sim E([a,\infty))e^{-itH_0}S\psi_0\sim E([a,\infty))e^{-itH_0}T(P)\psi_0,$$
in the sense of strong asymptotic convergence.
Hence, in this second approach, we want to compare the localization probabilities
\begin{align}
\mathbf{p}_{\rm II}(a) &:= \tr[T(P)^*E([a,\infty))T(P)e^{-itH_0} \rho_0 e^{itH_0}];\nonumber\\
\mathbf{p}^0_{\rm II}(a) &:= \tr[E([a,\infty))e^{-itH_0} \rho_0 e^{itH_0}],\label{localizationprobs}
\end{align}
for arbitrary states $\rho_0$ with $\tr[\rho_0\check{P}_+]=1$.

\subsection{Approach III: time zero}

Here we make a sharp localization measurement, with the observable $E^Q$, at time $t>0$
corresponding to an interval $[a,\infty)$, where $a\geq x_1$. Recall that $t=0$ is the initial time, the
support of the potential is contained in $(x_0,x_1)$, and the initial state $\rho_0$ satisfies
$\tr[\rho_0P_{\rm init}]=1$, where $P_{\rm init}=E^Q((-\infty,x_0])$. In view of the result
 \eqref{evolutionoperator} below, it will be convenient to put $P_{\rm final}:=E^Q([x_1,\infty))$.

If $\rho_0 = |\psi_0\rangle \langle \psi_0 |$, where $\psi_0$ belongs to the Schwartz space,
we can use the expansion \eqref{expansion3}. Since the support of $\psi_0$ is contained in
$(-\infty, x_0]$, we can express $(\psi_0,k)_i$ in terms of the Fourier transform; indeed, by
\eqref{asymptotic1} and \eqref{asymptotic2}, as well as Theorem \ref{transmission} (a), we get
\begin{align*}
(\psi_0,k)_1 &= \frac{1}{\sqrt{2\pi}}\int_{-\infty}^{x_0}\psi_0(x)\phi_1(x,-k)\, dx = \hat{\psi_0}(k) +R_l(-k) \hat{\psi_0}(-k),\\
(\psi_0,k)_2 &= \frac{1}{\sqrt{2\pi}}\int_{-\infty}^{x_0}\psi_0(x)\phi_2(x,-k)\, dx = T(-k)\hat{\psi_0}(-k).
\end{align*}
The expansion \eqref{expansion3} now takes the form
$$
(e^{-itH}\psi_0)(x) = \frac{1}{\sqrt{2\pi}}\int_{-\infty}^\infty \frac 12 [(\hat{\psi_0}(k) +R_l(-k) \hat{\psi_0}(-k))\phi_1(x,k)+T(-k)\hat{\psi_0}(-k)\phi_2(x,k)]e^{-itk^2}\, dk.
$$
Using again the relations \eqref{asymptotic1} and \eqref{asymptotic2} we get, for $x\geq x_1$,
\begin{eqnarray*}
(e^{-itH}\psi_0)(x) 
&=& \frac{1}{\sqrt{2\pi}}\int_{-\infty}^\infty \frac 12 \big[\hat{\psi_0}(k)T(k)e^{ikx}+ \hat{\psi_0}(-k)T(-k)e^{-ikx}\\
&+& [R_l(-k)T(k)+T(-k)R_r(k)]\hat{\psi_0}(-k)e^{ikx}\big]e^{-itk^2}\, dk\\
&=& \frac{1}{\sqrt{2\pi}}\int_{-\infty}^\infty \frac 12 \big[\hat{\psi_0}(k)T(k)e^{ikx}+ \hat{\psi_0}(-k)T(-k)e^{-ikx}\big]e^{-itk^2}\, dk,\\
\end{eqnarray*}
where the second equality is obtained by Theorem \ref{transmission} (a). Hence, in this case, the
evolution is simply given by
\begin{equation}\label{finalexpansion}
(e^{-itH}\psi_0)(x)=\frac{1}{\sqrt{2\pi}}\int_{-\infty}^\infty e^{ikx}T(k)\hat{\psi_0}(k)e^{-itk^2}\, dk, \,\,\, x\geq x_1,
\end{equation}
or, equivalently,
$$
P_{\rm final}e^{-itH}\psi_0 = P_{\rm final}F^{*}T(\hat{P})e^{-it\hat{P}^2}F\psi_0.
$$
Since the operators $F^{*}$, $T(\hat{P})$, $e^{-it\hat{P}^2}$, and $F$ are all bounded, and the Schwartz space
functions with support contained in $(-\infty, x_0]$ are dense in $L^2((-\infty, x_0])$, the above relation is
equivalent to the operator equality $P_{\rm final} e^{-itH}P_{\rm init} = P_{\rm final}F^{*}T(\hat{P})e^{-itM^2}FP_{\rm init}$, i.e.
\begin{equation}\label{evolutionoperator}
P_{\rm final}e^{-itH}P_{\rm init}=P_{\rm final}T(P)e^{-itH_0}P_{\rm init}.
\end{equation}
We want to compare the sharp localization probabilities
$\mathbf{p}_{\rm III}(a):=\tr[ E^Q([a,\infty))e^{-it H}\rho_0 e^{itH}]$ and
$\mathbf{p}^0_{\rm III}(a):=\tr[E^Q([a,\infty))e^{-it H_0}\rho_0 e^{itH_0}]$, where $\tr[\rho_0 P_{\rm init}]=1$,
and $t>0$, $a\geq x_1$ are fixed. Using \eqref{evolutionoperator}, and noting that
$P_{\rm init}\rho_0 P_{\rm init} = \rho_0$, $P_{\rm final}E^Q([a,\infty))P_{\rm final}= E^Q([a,\infty))$,
we immediately get
\begin{align}
\mathbf{p}_{\rm III}(a) &= \tr [T(P)^*E^Q([a,\infty))T(P)e^{-itH_0}\rho_0 e^{itH_0}];\nonumber\\
\mathbf{p}_{\rm III}^0(a) &=\tr[E^Q([a,\infty))e^{-it H_0}\rho_0 e^{itH_0}].\label{timezeroprobs}
\end{align}
Note that although this expression is similar to the one in the above scattering theory with localization -
approach, here $\rho_0$ has also negative momentum components, and thus also the values of $T$ for
negative argument are used.

\section{A mathematical theorem}

By looking at the probabilities \eqref{arrivalprobs}, \eqref{localizationprobs}, and \eqref{timezeroprobs} it
is clear that the problem of comparison is similar in each case. As we have already pointed out, we only
need to transform the covariant observable in each case into the spectral representation of the
associated generator, which is $H_0=P^2$ in the arrival time case, and $P$ in the localization case.
Since the spectrum of $H_0$ is $[0,\infty)$, while the spectrum of $P$ is $(-\infty, \infty)$, we need
to consider both $L^2([0,\infty))$ and $L^2(\R)$. In their respective spectral representations, the operators $H_0$ and $P$ act as the multiplication operator $M$.

In order to deal with both cases, we formulate the essential result (Theorem \ref{theorem2} below) for a subset $I\in \h B(\R)$ (either $\R$ or $[0,\infty)$, in practise), and the corresponding subspace $L^2(I)=\chi_I(M)L^2(\R)$. For a
Borel function $f:\R\to \C$, this subspace  is obviously invariant under the multiplication operator $f(M)$
for any $f:\R\to \C$, this operator being the multiplication by the restriction of $f$ to $I$. For notational
simplicity, we will use the symbol $f(M)$ also to denote this restricted operator. Such a restriction will be
needed for the transmission amplitude, when we restrict it to positive momenta.

It should be clear from the above discussion that here we do not fix the physical interpretation of $M$; consequently, also the operators $\hat{P}_\pm$ will appear in the following lemma without such interpretation. Since these projections are not involved in the statement of Theorem \ref{theorem2} (which is the only thing we need to refer to) they can just be regarded as mathematical auxiliaries in this section.

The following lemma contains the essential mathematical ingredient we need.
The symbol $\C_+\subset \C$ stands for the upper open half-plane., i.e. $\C_+=\{ \omega\in \C \mid {\rm Re}(\omega)>0\}$.

\begin{lemma}\label{theorem1}
Let $g:\R\to \C$ be a measurable function, and suppose that $\tilde{g}:\C_+\cup \R\to \C$ is an extension
of $g$ such that
\begin{itemize}
\item[(i)] $\tilde{g}$ is analytic on $\C_+$;
\item[(ii)] $|\tilde{g}(\omega)|\leq 1$ for all $\omega \in \C_+\cup \R$;
\item[(iii)] $\lim_{b\rightarrow 0+} \tilde{g}(a+ib) = g(a)$ for almost all $a\in \R$.
\end{itemize}
Then $$g(M)^*\hat{P}_+g(M)\leq \hat{P}_+.$$
\end{lemma}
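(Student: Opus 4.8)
The plan is to deduce the operator inequality from a single invariance property: that multiplication by $g$ preserves the subspace $\hat{P}_-L^2(\R)=(I-\hat{P}_+)L^2(\R)$, equivalently $\hat{P}_+g(M)\hat{P}_-=0$. Granting this, the inequality is immediate. Indeed, $\hat{P}_+g(M)\hat{P}_-=0$ gives $\hat{P}_+g(M)=\hat{P}_+g(M)\hat{P}_+$, and by (ii) and (iii) we have $|g|\le 1$ almost everywhere, so $\|g(M)\|\le 1$. Hence for every $\psi\in L^2(\R)$,
\[
\langle\psi,g(M)^*\hat{P}_+g(M)\psi\rangle=\|\hat{P}_+g(M)\psi\|^2=\|\hat{P}_+g(M)\hat{P}_+\psi\|^2\le\|g(M)\hat{P}_+\psi\|^2\le\|\hat{P}_+\psi\|^2=\langle\psi,\hat{P}_+\psi\rangle,
\]
where I use that $\hat{P}_+$ and $g(M)$ are contractions. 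This is exactly $g(M)^*\hat{P}_+g(M)\le\hat{P}_+$, so the whole problem reduces to the invariance statement.

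To prove $\hat{P}_+g(M)\hat{P}_-=0$ I would use Hardy space theory. First identify the subspace: by definition $\psi\in\hat{P}_-L^2(\R)$ iff $\check\psi=F^*\psi$ is supported in $(-\infty,0]$, and Paley--Wiener then characterizes these $\psi$ as the $L^2$-boundary values of functions $\tilde\psi$ analytic on the upper half-plane with $\sup_{b>0}\int_\R|\tilde\psi(a+ib)|^2\,da<\infty$; that is, $\hat{P}_-L^2(\R)$ is the Hardy space $H^2(\C_+)$. (Here the sign of the exponent in $F$ must be tracked carefully: with the convention $(F\vp)(y)=\frac1{\sqrt{2\pi}}\int e^{-iyx}\vp(x)\,dx$ the relevant analyticity domain is exactly the upper half-plane $\C_+$, matching the hypothesis on $\tilde g$.) Now take $\psi\in\hat{P}_-L^2(\R)$ with extension $\tilde\psi\in H^2(\C_+)$. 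By (i) the product $\tilde g\tilde\psi$ is analytic on $\C_+$, and by (ii),
\[
\int_\R|\tilde g(a+ib)\tilde\psi(a+ib)|^2\,da\le\int_\R|\tilde\psi(a+ib)|^2\,da\le\|\psi\|^2\qquad(b>0),
\]
so $\tilde g\tilde\psi\in H^2(\C_+)$ as well. Using (iii) together with the almost-everywhere boundary convergence $\tilde\psi(a+ib)\to\psi(a)$ as $b\to0+$ valid for $H^2(\C_+)$ functions, the boundary value of $\tilde g\tilde\psi$ equals $g\psi$ almost everywhere; hence $g\psi=g(M)\psi$ lies in $\hat{P}_-L^2(\R)$. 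Thus $\hat{P}_+g(M)\psi=0$ for all $\psi\in\hat{P}_-L^2(\R)$, which is the desired invariance.

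The main obstacle is this last step, where the three hypotheses are actually consumed: one must pass rigorously from the \emph{pointwise} data about $\tilde g$ (analyticity, the bound $|\tilde g|\le1$, and the a.e.\ boundary limit) to the \emph{Hilbert-space} statement that multiplication by the boundary function $g$ maps $H^2(\C_+)$ into itself with the correct boundary value. The bound $|\tilde g|\le1$ does the analytic work by making the $\sup_{b>0}$ integrals of $\tilde g\tilde\psi$ uniformly finite, but matching the boundary value of the product with $g\psi$ requires the boundary-value theory for $H^2$ (Fatou-type a.e.\ convergence) and some care that the almost-everywhere limits of $\tilde g$ and of $\tilde\psi$ may be multiplied. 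A technically lighter alternative, if one wants to minimize appeals to Hardy space machinery, is to first verify $\hat{P}_+g(M)\psi=0$ on a dense set of especially regular $\psi$ (e.g.\ those with $\check\psi$ smooth and compactly supported in $(-\infty,0)$, where shifting the relevant contour into $\C_+$ is unproblematic) and then extend by density and boundedness of $\hat{P}_+g(M)$.
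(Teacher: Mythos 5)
Your proposal is correct and follows essentially the same route as the paper's proof: reduce the operator inequality to the invariance of $\hat{P}_-L^2(\R)$ under $g(M)$, identify that subspace with the Hardy class on $\C_+$ via Paley--Wiener, note that (i) and (ii) keep the product $\tilde{g}\tilde{\psi}$ in the Hardy class, and use (iii) to identify its boundary value with $g\psi$. The only difference is a minor technical one: where you invoke Fatou-type almost-everywhere boundary convergence to match the boundary values, the paper instead uses the $L^2$-boundary convergence built into the Hardy-class characterization together with the strong convergence $g_b(M)\to g(M)$ (via (ii), (iii) and dominated convergence) -- both implementations are sound.
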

\begin{proof} Put $\hi=L^2(\R)$ for convenience. We first note that it is sufficient to prove
\begin{equation}\label{invariance}
g(M)\hat{P}_-\hi\subset \hat{P}_-\hi,
\end{equation}
Indeed, \eqref{invariance} implies $\langle \vp |g(M)^*\hat{P}_+g(M)\psi\rangle = \langle g(M)\vp |\hat{P}_+g(M)\psi\rangle = 0$ for
$\vp\in \hat{P}_+(\hi)^\perp=\hat{P}_-(\hi)$, $\psi\in \hi$, that is, $g(M)^*\hat{P}_+g(M)\hi\subset \hat{P}_+\hi$;
this gives $g(M)^*\hat{P}_+g(M) \leq \hat{P}_+$, because $g(M)^*\hat{P}_+g(M)$ is a positive
operator with norm less than one by (ii).

Now $\vp\in \hat{P}_-(\hi)$ if and only if $F^*\vp$ is supported in $(-\infty,0]$, or equivalently, if and only if $F\vp$ is supported in $[0,\infty)$. To prove
\eqref{invariance}, let $\vp\in \hat{P}_-(\hi)$, and define $\tilde{\vp}:\C_+\to \C$ via
$$
\tilde{\vp}(\omega):= \frac{1}{\sqrt{2\pi}}\int_\R e^{i\omega y} (F\vp)(y)\, dy, \ \ \omega \in \C_+,
$$
where the integral exists because $y\mapsto | e^{i\omega y}| = e^{-{\rm Im}(\omega) y}$ is now square
integrable over $[0,\infty)$. Then $\tilde{\vp}\in H^+$, where
$$
H^+ =\left\{ h:\C_+ \to \C\mid h \text{ analytic}, \ \sup_{b>0} \int_\R |h(a+ib)|^2\, da <\infty\right\}
$$
is called the Hardy class; see e.g. \cite[p. 161-162]{Dym}. According to this reference, the elements
$h\in H^+$ are characterized as precisely those functions $h:\C_+\to \C$ for which there exists a function
$f\in L^2(\R)$ supported on $[0,\infty)$, such that
$$
h(\omega) = \frac{1}{\sqrt{2\pi}}\int_\R e^{i\omega y} f(y)\, dy, \ \ \omega\in \C_+ .
$$
Here $f$ is recovered via the $L^2(\R)$-limit $F^{*} f= \lim_{b\rightarrow 0} h_b$, where $h_b(a) := h(a+ib)$.
Since $\tilde{\vp}\in H^+$, the assumptions (i) and (ii) clearly imply that also
$\omega\mapsto \tilde{g}(\omega)\tilde{\vp}(\omega)$ is an element of $H^+$. Hence, there exists a
$\psi\in \hat{P}_-(\hi)$ with
$$
\tilde{g}(\omega)\tilde{\vp}(\omega) = \frac{1}{\sqrt{2\pi}} \int_\R e^{i\omega y} (F\psi)(y)\, dy, \ \ \omega\in \C_+.
$$
To conclude the proof of \eqref{invariance}, we have to show that $\psi=g(M)\vp$. According to the
definition of $H^+$, we have $\psi_b, \vp_b\in L^2(\R)$ for any $b>0$, where $\psi_b(a):=\tilde{g}(a+ib)\tilde{\vp}(a+ib)$ and $\vp_b(a):=\tilde{\vp}(a+ib)$. Let $g_b$ be the function $a\mapsto \tilde{g}(a+ib)$
for each $b> 0$, so that $\psi_b = g_b(M)\vp_b$, for $b>0$. As mentioned above, we then have the
$L^2(\R)$-limits $\lim_{b\rightarrow 0} \psi_b = \psi$ and $\lim_{b\rightarrow 0} \vp_b = \vp$. Since the
family of multiplication operators $\{ g_b(M)\mid b>0\}$ is uniformly bounded because of (ii), and
$g_b(M)$ tends strongly to $g(M)$ as $b\rightarrow 0$ (by (iii), (ii) and the dominated convergence),
it follows that $\psi = g(M)\vp$. The proof is complete.
\end{proof}

Next we need information on the structure of covariant observables. This is well-known (see e.g.
\cite{Holevo2,Werner}). For the purposes of this paper, it is convenient to state it in the following form,
which can be immediately specialized from the general construction procedure, obtained by combining
Mackey's imprimitivity theorem with a dilation argument \cite{Werner}.

\begin{lemma} Let $I\subset \R$ be a Borel set, and make the usual identification $L^2(I)=\chi_I(M)L^2(\R) \subset L^2(\R)$. Let $E:\h B(\R)\to \h L(L^2(I))$ be a positive operator measure with
the covariance property
$$
e^{-iaM} E(B)e^{iaM} = E(B+a), \ \ \text{for all } B\in \h B(\R),\, a\in \R.
$$
Then $E$ is of the form
$$E(B) = \Psi(F\chi_B(M)F^*), \  \ \text{for all } B\in \h B(\R),$$
where $\Psi:\h L(L^2(\R))\to \h L(L^2(I))$ is a positive linear map satisfying
$$\Psi\big(f(M)^*Af(M)\big) = f(M)^*\Psi\big(A\big)f(M),$$
for any bounded Borel function $f:\R\to \C$ and $A\in \h L(L^2(\R))$.
\end{lemma}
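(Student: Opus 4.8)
The plan is to realize $E$ as a compression of a \emph{sharp} covariant observable and then invoke the classification of such sharp observables. The starting observation is that, by the Fourier connection \eqref{fourierconnection}, the set function $P_0(B):=F\chi_B(M)F^*=\chi_B(D)$ is a projection-valued measure on $L^2(\R)$, and it is \emph{itself} covariant for the representation $U(a):=e^{-iaM}$. Indeed, the only nonvanishing commutator is the scalar $[-iaM,D]=-a$, so $e^{-iaM}De^{iaM}=D-a$, whence $e^{-iaM}P_0(B)e^{iaM}=\chi_B(D-a)=\chi_{B+a}(D)=P_0(B+a)$. Thus $P_0$ is the canonical covariant sharp observable for $U$, and the lemma asserts precisely that every covariant POM is built from $P_0$ by a modular positive map. (I would flag the sign here as the one delicate convention: were the commutator computed with the wrong sign, $P_0$ would transform by $B\mapsto B-a$ and the canonical observable would be misidentified.)

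First I would dilate to a sharp, covariant observable. By the covariant version of Naimark's dilation theorem there exist a Hilbert space $\hK$, an isometry $V:L^2(I)\to\hK$, a projection-valued measure $P:\h B(\R)\to\h L(\hK)$, and a unitary representation $W$ of $\R$ on $\hK$ with $E(B)=V^*P(B)V$, the intertwining $W(a)V=VU(a)$, and the covariance $W(a)P(B)W(a)^*=P(B+a)$; one obtains $W$ by taking the minimal Naimark dilation of $E$ and using the covariance of $E$ to push $U$ forward to a well-defined unitary on the dilation space. The pair $(W,P)$ is then a system of imprimitivity for $\R$ acting on itself by translation, so by Mackey's imprimitivity theorem (for $\R$ this reduces to the Stone--von Neumann uniqueness of the canonical commutation relation) there is a multiplicity space $\hN$ and a unitary $\Phi:\hK\to L^2(\R)\otimes\hN$ carrying $P(B)$ to $P_0(B)\otimes I$ and $W(a)$ to $e^{-iaM}\otimes I$. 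Replacing $V$ by $\Phi V$, I may assume $\hK=L^2(\R)\otimes\hN$, $P(B)=P_0(B)\otimes I$, and $W(a)=e^{-iaM}\otimes I$, with $(e^{-iaM}\otimes I)V=VU(a)$ still holding.

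It then remains to read off the map. I would define $\Psi:\h L(L^2(\R))\to\h L(L^2(I))$ by $\Psi(A):=V^*(A\otimes I)V$, which is manifestly linear and positive, and which satisfies $E(B)=V^*(P_0(B)\otimes I)V=\Psi(F\chi_B(M)F^*)$. For the modularity, the intertwining $(e^{-iaM}\otimes I)V=VU(a)$ extends, via Stone's theorem and the bounded Borel functional calculus applied to the generators, to $(f(M)\otimes I)V=Vf(M)$ for every bounded Borel $f$; taking adjoints gives $V^*(f(M)^*\otimes I)=f(M)^*V^*$. Hence
$$\Psi\big(f(M)^*Af(M)\big)=V^*(f(M)^*\otimes I)(A\otimes I)(f(M)\otimes I)V=f(M)^*\Psi(A)f(M),$$
as required.

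The substantive content lies entirely in the two cited structural inputs — the covariant Naimark dilation and the imprimitivity classification of the resulting system $(W,P)$ — and this is where I expect the real difficulty to be. Once those are granted, the construction of $\Psi$ and the verification of its positivity and of the modular identity are routine, resting only on the Stinespring form $\Psi(A)=V^*(A\otimes I)V$ and the single intertwining relation $W(a)V=VU(a)$.
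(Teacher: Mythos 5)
Your route---compressing a sharp covariant observable obtained from a covariant Naimark dilation, classifying the dilated pair $(W,P)$ by Mackey's imprimitivity theorem (Stone--von Neumann for $\R$), and reading off $\Psi(A)=V^*(A\otimes I)V$ with modularity coming from the intertwining $(f(M)\otimes I)V=Vf(M)$---is precisely the construction the paper itself invokes (it offers no written proof, only the citation to the dilation-plus-imprimitivity argument in Werner's screen-observables paper), and your verifications, including the sign check showing $e^{-iaM}\chi_B(D)e^{iaM}=\chi_{B+a}(D)$, are sound. The one caveat is that the paper's POMs are \emph{not} required to satisfy $E(\R)=I$, whereas an isometry $V$ together with a normalized PVM forces $E(\R)=V^*V=I$; to cover the sub-normalized case (which Theorem \ref{theorem2} as stated needs) you should either allow $V$ to be a contraction, or first extend $E$ to the one-point compactification $\R\cup\{\infty\}$ by setting $E(\{\infty\})=I-E(\R)$ (the point $\infty$ being fixed by translations, covariance survives), dilate there, and compress back---a routine adjustment after which your argument goes through verbatim.
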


With this result, the ingredient given by Lemma \ref{theorem1} can be
imported into the covariant setting:

\begin{theorem}\label{theorem2} Let $I\in \h B(\R)$,
and let $E:\h B(\R)\to \h L(L^2(I))$ be a POM with the covariance property
$$
e^{-iaM} E(B)e^{iaM} = E(B+a), \ \ B\in \h B(\R), \, a\in \R.
$$
Let $g:\R\to \C$ satisfy the conditions of Lemma \ref{theorem1}. Then
$$
g(M)^*E([a,\infty))g(M) \leq E([a,\infty)), \ \ a\in \R.
$$
\end{theorem}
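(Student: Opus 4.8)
The plan is to combine the two structural inputs already in hand---the Hardy-class inequality of Lemma \ref{theorem1} and the dilation/module form of covariant observables in the preceding (unlabeled) structure lemma---and to reduce the statement for general $a$ to the case $a=0$ by exploiting that $g(M)$ commutes with every function of $M$.

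First I would dispose of the shift. Applying the covariance relation to $B=[0,\infty)$ gives $E([a,\infty)) = e^{-iaM}E([0,\infty))e^{iaM}$. Since both $g(M)$ and $e^{iaM}$ are functions of the single selfadjoint operator $M$, they commute, so
$$
g(M)^*E([a,\infty))g(M) = e^{-iaM}\big(g(M)^*E([0,\infty))g(M)\big)e^{iaM}.
$$
Conjugation by the unitary $e^{-iaM}$ is order-preserving, so once the inequality is established at $a=0$ it transports verbatim to arbitrary $a$. Thus it suffices to prove $g(M)^*E([0,\infty))g(M)\leq E([0,\infty))$.

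For $a=0$ I would invoke the structure lemma: write $E([0,\infty)) = \Psi(F\chi_{[0,\infty)}(M)F^*) = \Psi(\hat{P}_+)$, where $\Psi$ is the positive linear map enjoying the module property $\Psi(f(M)^*Af(M)) = f(M)^*\Psi(A)f(M)$. Condition (ii) of Lemma \ref{theorem1} forces $|g|\leq 1$ on $\R$, so $g$ is a bounded Borel function and this property applies with $f=g$ and $A=\hat{P}_+$, yielding $g(M)^*E([0,\infty))g(M) = \Psi\big(g(M)^*\hat{P}_+ g(M)\big)$. Lemma \ref{theorem1} supplies the operator inequality $g(M)^*\hat{P}_+ g(M)\leq \hat{P}_+$, and because $\Psi$ is positive it is monotone (from $A\leq B$ one gets $\Psi(B-A)\geq 0$, i.e. $\Psi(A)\leq \Psi(B)$), whence $\Psi\big(g(M)^*\hat{P}_+ g(M)\big)\leq \Psi(\hat{P}_+) = E([0,\infty))$. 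Chaining the two displays closes the $a=0$ case.

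I do not expect a genuine obstacle here, since the hard analytic content---the Hardy-space continuation argument---is already packaged in Lemma \ref{theorem1}; the remaining points demand only care in bookkeeping. I would double-check that the module property is invoked for a legitimately bounded $g$ (guaranteed by (ii)), and that positivity of $\Psi$ indeed delivers the monotonicity used. The one spot where a sign or ordering slip could creep in is the shift step, so I would verify explicitly that the covariance relation produces $E([a,\infty))=e^{-iaM}E([0,\infty))e^{iaM}$ rather than its inverse, and that $g(M)$ commutes with $e^{iaM}$---both immediate from the functional calculus for $M$.
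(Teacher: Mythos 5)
Your proposal is correct and follows essentially the same route as the paper's own proof: reduction to $a=0$ via covariance and the fact that $e^{iaM}$ commutes with $g(M)$, then the structure lemma's module property to pull $g(M)$ inside $\Psi$, and finally positivity of $\Psi$ applied to the inequality $g(M)^*\hat{P}_+g(M)\leq \hat{P}_+$ from Lemma \ref{theorem1}. The only difference is that you spell out the bookkeeping (order preservation under unitary conjugation, monotonicity of positive maps, boundedness of $g$ from condition (ii)) which the paper leaves implicit.
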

\begin{proof}
First note that by covariance, $E([a,\infty))$ is unitarily equivalent with $E([0,\infty))$ via the unitary $e^{iaM}$,
which commutes with $g(M)$. Hence, it suffices to prove the inequality for $t=0$.
To this end, let $\Psi:\h L(L^2(\R)))\to L(L^2(I))$ be the map given by the above lemma, corresponding to $E$.
Since $g$ is a bounded Borel function, and $\Psi$ is positive and linear, Lemma \ref{theorem1} gives
\begin{eqnarray*}
g(M)^*E([0,\infty))g(M) &=& g(M)^*\Psi\big(FP_+F^*\big)g(M)= \Psi\big(g(M)^*\hat{P}_+g(M)\big)\\
&\leq& \Psi\big(\hat{P}_+\big) = E([0,\infty)),
\end{eqnarray*}
and the proof is finished.
\end{proof}

\begin{remark}\rm Notice that although the range of all the $E(B)$ are here asserted to be in $L^2(I)$,
and, consequently, only the restriction $g|_I$ appears in the inequality of the above theorem, it is
still necessary to have $g$ as a function on the whole $\R$. This is because otherwise we could not
move $g(M)$ inside the argument of $\Psi$, and apply Lemma \ref{theorem1}.
\end{remark}

\section{Results}

We can now apply Theorem \ref{theorem2} of the preceding section to the three relevant cases,
in order to compare the probabilities in \eqref{arrivalprobs}, \eqref{localizationprobs}, and \eqref{timezeroprobs}.

\subsection{Approach I: arrival time}
We pass from ''position representation'' to the ''energy representation'' by means of the unitary operator $U_+F$, where the unitary
$U_+:L^2([0,\infty),dk)\to L^2([0,\infty), dE)$ is given by
$$(U_+\vp)(E) = 2^{-\frac 12} E^{-\frac 14} \vp(\sqrt{E}), \, \, \, \text{ for all } \vp\in L^2([0,\infty), dk).$$
The transmission amplitude $T(\hat{P})=T(M)$ (which already acts in $L^2([0,\infty),dk)$), is transformed into $U_+T(M)U_+^*=T'(M)$, where $T':\R\to \C$ is given as $T'(E) = T(\sqrt{E})$, and the arrival time observable $E_0:\h B(\R)\to \h L(L^2(\R,dx))$
transforms into the observable $E'_0:\h B(\R)\to \h L(L^2([0,\infty),dE))$, where
$$E'_0(B) = U_+FE_0(B)F^*U_+^*.$$
Note that since the range of $E_0(B)$ is included in $\check{P}_+L^2(\R,dx)$ (positive momenta),
the range of $FE_0(B)$ is included in $L^2([0,\infty),dk)$, so that this is indeed well-defined.
The probabilities in \eqref{arrivalprobs} now have the form
\begin{align*}
\mathbf{p}_{\rm I}(t) &= \tr[T'(M)^*E_0'((-\infty,t])T'(M)U_+\hat{\rho_0}U_+^*],\\
\mathbf{p}^0_{\rm I}(t) &= \tr[E_0'((-\infty,t])U_+\hat{\rho_0}U_+^*].
\end{align*}
Choosing the square root branch
$$\{ re^{i\theta} \mid r>0, \, -\pi <\theta\leq \pi \} \ni re^{i\theta}\mapsto \sqrt{r}e^{\tfrac 12 \theta}
\in \{ re^{i\theta} \mid r>0, \, -\pi/2 <\theta\leq \pi/2 \},$$
and using Theorem \ref{transmission} (b), we see that $T' = T(\sqrt{\cdot})$ satisfies the conditions
of Lemma \ref{theorem1}. Due to the intertwining relations $U_+e^{itM^2}=e^{itM}U_+$, $Fe^{itH_0} = e^{itM^2}F$ (which hold for all $t\in \R$), the observable $E_0'$ satisfies the covariance condition
$$
e^{itM}E_0'(B)e^{-itM} = E_0'(B+t), \ \ t\in \R,
$$
which implies that $B\mapsto E_0'(-B)$ (and not $E_0'$ itself) satisfies the covariance condition of
Theorem \ref{theorem2}. This reflection simply means that the conclusion of Theorem \ref{theorem2} holds for observable $E_0'$ with the intervals $-[-t,\infty)=(-\infty, t]$, and we get
\begin{equation}\label{arrivalresult}
\mathbf{p}_{\rm I}(t)\leq \mathbf{p}^0_{\rm I}(t), \ t\in \R,
\end{equation}
for any state $\rho_0$ with $\tr[\rho_0\check{P}_+]=1$. This means that \emph{the probability of arrival
by the time $t$ is never larger for the tunneled particle.}

\subsection{Approach II: localization measurement}
Here the application of Theorem \ref{theorem2} is more straightforward: In comparing the probabilities \eqref{localizationprobs}, we only need to pass to the momentum
representation; define $\hat{E}:\h B(\R)\to \h L(L^2(\R,dk))$ as $\hat{E}(B)= FE(B)F^*$. Because of the
intertwining $Fe^{-itP} = e^{-itM}F$, this observable now satisfies the covariance condition
of Theorem \ref{theorem2}. In addition, $T$ satisfies the conditions of Lemma \ref{theorem1} by Theorem \ref{transmission} (b), so Theorem \ref{theorem2} immediately gives
\begin{equation}\label{localizationoperatorinequality}
T(M)^*\hat{E}([a,\infty))T(M) \leq \hat{E}([a,\infty)),  \, \, \text{ for all } a\in \R.
\end{equation}
Applied to the probabilities \eqref{localizationprobs}, this implies
$$
\mathbf{p}_{\rm II}(a)\leq \mathbf{p}^0_{\rm II}(a), \  a\in \R,
$$
for any state $\rho_0$ with $\tr[\rho_0\check{P}_+]=1$. (Recall also that the localization measurement is made at large time $t$; for small $t$, the comparison is just between the transmitted parts.) The result means that \emph{the probability of having passed the point $a$ at time $t$ (in the sense of the particular localization used) is never larger for the tunneled particle.}

\subsection{Approach III: time zero}
Finally, consider the probabilities \eqref{timezeroprobs}. Since $E^Q$ is a localization observable, the
inequality \eqref{localizationoperatorinequality} immediately applies also to this case, and we get
$$
\mathbf{p}_{\rm III}(a)\leq \mathbf{p}^0_{\rm III}(a), \ a\geq x_1,
$$
for any state $\rho_0$ with $\tr[\rho_0P_{\rm init}]=1$. This has the same meaning as in the above case,
except that here the localization is only understood in the sharp sense, and the time $t$ at which the measurement is performed can be any $t>0$.

\

\noindent {\bf Acknowledgment.} One of the authors (J. K.) was supported by Finnish Cultural Foundation during the preparation of the manuscript.

\end{document}